\documentclass[amsmath,amssymb,article,aps,pra,showpacs,twocolumn,longbibliography]{revtex4-1}
\usepackage[latin1]{inputenc}
\usepackage[T1]{fontenc}
\usepackage{mathptmx}
\usepackage{graphicx}
\usepackage{ae,aecompl}
\usepackage{amssymb}
\usepackage{bm}
\usepackage{extarrows}
\usepackage{tikz-cd}
\usetikzlibrary{matrix}

\usepackage[ruled,vlined,norelsize]{algorithm2e}

\textwidth=15cm \textheight=22cm \topmargin=0.5cm
\oddsidemargin=0.5cm \evensidemargin=0.5cm

\usepackage{color}

\newcommand{\ket}[1]{\left|#1\right\rangle}
\newcommand{\bra}[1]{\left\langle#1\right|}
\newcommand{\braket}[2]{\left\langle#1|#2\right\rangle}

\usepackage{paralist}

\newtheorem{lemma}{Lemma}

\newtheorem{theorem}{Theorem}

\newtheorem{defi}{Definition}

%\numberwithin{algocf}{section}

%\theoremstyle{theorem}
%\newtheorem{theorem}{Theorem}[section]
%\newtheorem{lemma}[theorem]{Lemma}
%\newtheorem{corollary}[theorem]{Corollary}
%\newtheorem{proposition}[theorem]{Proposition}

%\usepackage[usenames,dvipsnames]{xcolor}
%\newcommand{\annot}[1]{\textcolor{MidnightBlue}{#1}}

%\theoremstyle{definition}
%\newtheorem{example}[theorem]{Example}

%\newtheorem{remark}[theorem]{Remark}
%\newtheorem{definition}[theorem]{Definition}

%\numberwithin{equation}{section}

\newcommand{\tr}{\text{tr}}

  % Menge
  % 'mit'-Symbol in Mengen
 %definiert als

%\newcommand{\RR}{\mathbb{R}}

%\newcommand{\Pc}{\mathcal{P}}

\def\Ad{\textrm{Ad}\;}

\newcommand{\dd}{\mathfrak{d}}
\newcommand{\sd}{\mathfrak{sd}}

%\newcommand{\FF}{\mathcal{F}}
 %Laurent Series
 %Laurent polynomial

\newenvironment{proof}{\paragraph{Proof:}}{\hfill$\square$}

%------------------------------------------------------------------------------------------------------------------------------------------------------------------------
\begin{document}
\title{A method to determine which quantum operations can be realized with linear optics with a constructive implementation recipe}

\author{Juan Carlos Garcia-Escartin}
\email{juagar@tel.uva.es}  
\affiliation{Universidad de Valladolid, Dpto. Teor\'ia de la Se\~{n}al e Ing. Telem\'atica, Paseo Bel\'en n$^o$ 15, 47011 Valladolid, Spain}

\author{Vicent Gimeno}
\email{gimenov@mat.uji.es}  
 \affiliation{Universitat Jaume I (UJI),  Campus de Riu Sec, Institut Universitari de Matem\` atiques i Aplicacions de Castell\' o , 12071 Castell\' on
de la Plana, Spain.}

\author{Julio Jos\'e Moyano-Fern\'andez}
  \email{moyano@uji.es}
 \affiliation{Universitat Jaume I (UJI),  Campus de Riu Sec, Institut Universitari de Matem\` atiques i Aplicacions de Castell\' o , 12071 Castell\' on
de la Plana, Spain.}

\date{\today}

\begin{abstract}
The evolution of quantum light through linear optical devices can be described by the scattering matrix $S$ of the system. For linear optical systems with $m$ possible modes, the evolution of $n$ input photons is given by a unitary matrix $U=\varphi_{m,M}(S)$ derived from a known homomorphism, $\varphi_{m,M}$, which depends on the size of the resulting Hilbert space of the possible photon states, $M$. We present a method to decide whether a given unitary evolution $U$ for $n$ photons in $m$ modes can be achieved with linear optics or not and the inverse transformation $\varphi_{m,M}^{-1}$ when the transformation can be implemented. Together with previous results, the method can be used to find a simple optical system which implements any quantum operation within the reach of linear optics. The results come from studying the adjoint map bewtween the Lie algebras corresponding to the Lie groups of the relevant unitary matrices.
\end{abstract}

\maketitle

%\affiliation{Universitat Jaume I, Departamento de Matem\'aticas and IMAC-Institut Universitari de Matem\`atiques i Aplicacions de Castell\'o, 12071 Castell\'on de la Plana, Spain.}

%\subjclass[2010]{Primary: 00X00; Secondary:  05E40.}

\maketitle
%\tableofcontents
\section{Introduction}\label{intro}
The action of a linear optical device can be described for classical (coherent) light fields as well as for single photon states using unitary matrices. The unitary transformations induced by linear optical devices for generic $n$-photon states of $m$ modes are difficult to compute due to the indistinguishability of photons. Our paper studies the evolution of states of light through a linear optical interferometer acting on $n$ photons in $m$ different modes. These devices are also called linear optics multiports and conserve the total number of photons. 

More precisely, we provide a prescription to assess whether a particular unitary $U$ acting the space of $n$-photon states distributed on $m$ modes can be realized by a linear optical setup, cf. Theorem \ref{teo2.3}, and, if it can, give a recipe to build a device implementing it, cf. Theorem \ref{teo3.1}. Throughout the text we will provide two worked examples of the method.

We will work in the Hilbert space $\mathcal{H}_{m,n}$ of the quantum states $\ket \psi $ of $n$ photons in $m$ modes, with superpositions of the form
$$
\ket \psi = \sum_{n_1+\cdots + n_m=n} \alpha_{n_1,\ldots , n_m} \ket{n_1 \cdots n_m},
$$
where $\ket{n_1\cdots n_m}$ is a state with $n_k$ photons in the $k$-th mode. This space is isomorphic to $\mathbb{C}^M$, where 
$$M=\dim_{\mathbb{C}} \mathcal{H}_{m,n}= {m+n-1\choose n}$$
since each photon number state is orthogonal to the rest:
$$
\langle n_1'\cdots n_m'\vert n_1\cdots n_m\rangle = \delta_{n_1'n_1}\cdots \delta_{n_m' n_m}.
$$
%Observe that $\mathcal{H}_{m,n}$ is a subspace of 
%$$
%{\rm span}_{\mathbb{C}}\left(\left\{ \vert n_{j_{1}},\cdots,n_{j_{m}}\rangle\right\}_{n_1,n_2,\cdots,n_m=1}^n\right)=\left(\mathbb{C}^m\right)^{\otimes n}
%$$
%
Linear interferometers are described by $m\times m$ unitary matrices $S\in U(m)$ (the unitary group of $m\times m$ unitary matrices). 
The interferometer characterized by $S$ naturally induces a transformation $\varphi_{m,M}(S)$ of arbitrary input states
$$
\ket \psi\to\varphi_{m,M}(S)\ket\psi.
$$
In order to understand the action of the interferometer $S$ on multiple photon states we need to study the underlying transformation $\varphi_{m,M}$ from $U(m)$ to the $U(M)$, the group containing the unitary matrix $U$ which gives the evolution of the $n$-photon state in $\mathcal{H}_{m,n}$. Aaronson and Arkhipov \cite{AA11} give a nice algebraic description of the resulting one-to-one correspondence, which turns out to be a group homomorphism we call the $(m,M)$-\emph{photonic homomorphism} (see also \cite{Cai53,SGL04,Sch04} for alternative descriptions).

This paper gives the inverse transformation $\varphi_{m,M}^{-1}(S)$, providing a way to implement any possible linear interferometer. Apart from being interesting in itself, this result also has applications to linear optics quantum computing \cite{KLM01,KMN07} and in boson sampling, a problem which could prove quantum systems can outperform classical computers \cite{AA11,Aar11}. 

\subsection{Organization of the paper}
In Section \ref{main} we give an informal overview of the results. First, Sections \ref{opticalrealizations} and \ref{groupsalgebras} introduce the basic concepts we need. Then, in Section \ref{summary}, we give a general explanation of our results. Section \ref{adjrep} presents the concept of the adjoint representation, which is fundamental in our method. Section \ref{algbasis} describes our preferred basis when working with unitary algebras and Section \ref{exSpace} introduces the state space we will use for our guided examples in the rest of the paper.

Section \ref{achievement} gives a necessary and sufficient condition for a unitary to be implemented with linear optics and Sections \ref{ex2} and \ref{ex3} show concrete examples of operations which cannot and can be implemented with linear optics, respectively.  

Section \ref{construction} shows the method that finds the scattering matrix of the linear system which gives a given unitary operation $U$ (if it exists). Section \ref{recipe} shows the necessary steps, which are illustrated for an example in Section \ref{ex4}.

Section \ref{proofs} gives the detailed proofs of the Theorems on which our results are based. 

Finally, Section \ref{conclusions} discusses the importance of our results and their limitations. 

\section{Outline of the results and definitions}\label{main}

\subsection{Optical realizations and their implementation}\label{opticalrealizations}

Any linear interferometer with $m$ modes is completely described by an $m\times m$ unitary scattering matrix $S$ which has a limited number of degrees of freedom, $2m^2-1$. This means that, except for trivial cases when $n=1$ or $m=1$, linear multiports can only provide a limited subset of all the possible operations over $n$ photons in $m$ modes, which are described by $M\times M$ unitary matrices $U$ with $2M^2-1$ degrees of freedom \cite{MG17}. 

The scattering matrices $S$ are elements of the unitary group $U(m)$ and any general evolution $U$ on $\mathcal{H}_{m,n}$ is an element of the unitary group $U(M)$. The subgroup of all the operations which can be implemented with linear optics is described by the image subgroup of $\varphi_{m,M}$, $\mathrm{im} \varphi_{m,M} = \{B \in U(M) : \exists \, A \in U(m) ~\mbox{such that}~\varphi_{m,M}(A)=B\}$. We say 
\begin{defi}
a matrix $U \in U(M)$ is an $(m,n)$-optical realization if $U \in \mathrm{im} \varphi_{m,M}$.
\end{defi}

In this paper, we relate these groups to give a decision criterion that checks if any particular operator $U$ is in $\mathrm{im} \varphi_{m,M}$ (it is an $(m,n)$-optical realization) and, if it is possible, compute the inverse $\varphi_{m,M}^{-1}(U)$ and recover the unitary matrix $S$ of the linear interferometer which gives the desired evolution. Once we find $S$, we can use previous known results which tell how to build any desired multiport with a fixed scattering matrix using only beam splitters and phase shifters \cite{RZB94} or only beam splitters \cite{BA14,Saw16}, closing the full circle for the experimental implementation of $U$.   

\subsection{Linear optical evolution from the unitary (group) and Hermitian (algebra) matrices point of view}
\label{groupsalgebras}
The map $\varphi_{m,M}$ is a differentiable group homomorphism \cite{AA11} and it induces an algebra homomorphism, $d \varphi_{m,M}$, as described by the commutative diagram 
\begin{center}
	\begin{tikzcd}
		\mathfrak{u}(m) \arrow{r}{d \varphi_{m,M}}\arrow{d}[swap]{\exp} & \mathfrak{u}(M)\arrow{d}{\exp} \\
		U(m)\arrow{r}[swap]{ \varphi_{m,M}} & U(M) \\
	\end{tikzcd}
\end{center}
which relates the unitary groups $U(m)$ and $U(M)$ containing the scattering matrix $S$ and the $n$-photon evolution operator $U$, respectively, to the algebras $\mathfrak{u}(m)$ and $\mathfrak{u}(M)$ whose elements correspond to antihermitian matrices $iH_S$ and $iH_U$ which give an equivalent description of the evolution through exponentiation of the Hamiltonians $H_S$ and $H_U$ ($S=e^{iH_S}$ and $U=e^{iH_U}$) \cite{LN04,GGM18}. 

Both the homomorphism $ \varphi_{m,M}$ and the differential $d \varphi_{m,M}$ can be described in terms of the the photon creation, $\hat{a}_i^{\dag}$, and annihilation, $\hat{a}_i$, operators for mode $k$ \cite{Lou00}, which act on states with $n_k$ photons in the $k$-th mode following
\begin{equation}
\label{CreatAnni}
\begin{split}
\hat{a}_k^{\dagger}\ket{n_k}_k&=\sqrt{n_k+1}\ket{n_k+1}_k,\\
\hat{a}_k\ket{n_k}_k&=\sqrt{n_k}\ket{n_k-1}_k, \hspace{2ex} n \geq 1,\hspace{1ex} \hat{a}_k\ket{0}_k=0.
\end{split}
\end{equation} 

The homomorphism $\varphi_{m,M}$ can be understood from studying the evolution of the creation operators $\hat{a}_k^{\dagger}$ in the Heisenberg picture under the action of a unitary $U$, $\hat{a}_k^{\dagger} \longrightarrow U \hat{a}_k^{\dagger} U^\dag$. For an $n$-photon input state
\begin{equation}
\ket{n_1 n_2 \ldots n_m}=\prod_{k=1}^{m} \left(\frac{\hat{a}_k^{\dag n_k}}{\sqrt{n_k!}}\right)\ket{00\ldots 0},
\end{equation}
the output state after a linear interferometer described from the elements of $S$ as \cite{Cai53,Sch04,SGL04}: 
\begin{equation}
\label{HeisenbergU}
U\ket{n_1 n_2 \ldots n_m}=\prod_{k=1}^{m} \frac{1}{\sqrt{n_k!}}\left(\sum_{j=1}^{m}S_{jk} \hat{a}_j^{\dag}\right )^{n_k}\ket{00\mathellipsis0}.
\end{equation}
We can also write the elements of $U$ from the permanent of different submatrices of $S$ \cite{Sch04}. 

From the differential $d\varphi_{m,M}$  \cite{GGM18}, we can write the effective Hamiltonian $H_U$ of a linear optical transformation as:
\begin{equation}
\label{HUpq}
\bra{p}iH_U\ket{q}=\bra{p}\sum_{l=1}^{m}\sum_{j=1}^{m} iH_{Sjl} \hat{a}^{\dag}_j  \hat{a}_l \ket{q},
\end{equation}
where $\ket{p}$ and $\ket{q}$ are the photon number states in our Hilbert space. The same results can be reached from alternative points of view \cite{LN04,FX97,AC05,BL91,ALN06}.

\subsection{Summary of the results}
\label{summary}
The main design procedure is based on a simple basis decomposition in the image subalgebra of the Hamiltonian, with a detour due to the complications that appear when finding matrix logarithms.

If we know the desired final Hamiltonian, $H_U$, we can check if it can be implemented with linear optics by looking for a linear combination
\begin{equation}
\label{HUbasis}
iH_U=\sum_{i} X_{i} b_i
\end{equation}
of elements of the basis $\left\{b_i \right\}$ of the image subalgebra $\dd\subseteq \mathfrak{u}(M)$ for $i=1, \ldots, m^2$. The elements $b_i=\varphi_{m,M}(a_i)$ are the image of the elements of a basis $\left\{a_i \right\}$ of $\mathfrak{u}(m)$. $H_U$ can be implemented with linear optics if Eq. (\ref{HUbasis}) has a solution. Since $\varphi_{m,M}$ is a linear transformation, the effective Hamiltonian in $\mathfrak{u}(m)$ is
\begin{equation}
iH_S=\sum_{i} X_{i} a_i
\end{equation}
for the same coefficients $X_i$. Now, $S=e^{iH_S}$ is the unitary matrix of the desired linear interferometer, for which there are known methods for an experimental implementation \cite{RZB94,BA14,Saw16}.

The problem reduces to solving Eq. (\ref{HUbasis}), which can be expressed as a system of $M\times M$ linear equations, one for each matrix element, and $m^2$ indeterminates (the size of the basis). If the system is not consistent, we know $H_U$ cannot be implemented exactly using only linear optics.

Unfortunately, finding whether a given unitary $U \in U(M)$ is an optical realization is more involved. In principle, it seems we could just take $iH_U=\log U$ and proceed as before, but, unlike the exponential map, the matrix logarithm is a multivalued function. Computing the matrix logarithm of a unitary numerically presents some challenges, particularly if it has degenerate eigenvalues, as many interesting operations, such as the Quantum Fourier Transform, do. However, there are reliable methods to obtain a Hamiltonian matrix from a unitary \cite{Lor14}.

The greatest obstacle is choosing the correct branch when we are restricted to a subgroup. For a unitary in the image subgroup, we need to guarantee that the logarithm branch we choose is in the image subalgebra $iH_U\in \dd$. Otherwise, we will not be able to find a decomposition in terms of the basis of the image subalgebra, even if $U$ can really be implemented. 

We need a method to find a Hermitian matrix in the form given by Eq. (\ref{HUpq}). These matrices have a strong structure. The only non-zero elements are in positions which correspond to transitions between states that are, at most, one photon away from each other (for input states $\ket{p}$ and $\ket{q}$ which only differ in the photon number in two positions, one mode giving the photon to the other \cite{GGM18}).  

The main contribution of our method is giving an alternative way of finding a suitable basis decomposition, when there is one, by using the adjoint representation. The method avoids using the usual matrix logarithm calculations. There is no need to compute any eigenvalues and we mostly use simple linear algebra methods (matrix multiplication and Gaussian elimination to solve linear systems). If the operation $U$ cannot be implemented, we prove it and if it can, we give a complete description in terms of a linear interferometers.

\subsection{The adjoint representation}\label{adjrep}
We will find the inverse of the $\varphi_{m,M}$ using the adjoint representation, which gives an alternative way of describing linear interferometers and can help to study the evolution of unitary operators \cite{SK17}. 

Let $\mathrm{Ad}_U:\mathfrak{u}(M)\to \mathfrak{u}(M)$ be the adjoint map defined by $\mathrm{Ad}_U(iH_U)=UiH_U U^{\dag}$ \cite{Hal15}. We can also define $\mathrm{Ad}_S:\mathfrak{u}(m)\to \mathfrak{u}(m)$ for the scattering matrix so that $\mathrm{Ad}_S(iH_S)=SiH_S S^{\dag}$. 

When $U$ is an optical realization, the group of linear interferometers can be equally described by $m\times m$ unitaries $S$, by $M\times M$ unitaries $U$ in the image subgroup or by the Hermitian matrices $H_S$ and $H_U$ with $iH_S$ and $iH_U$ in the associated unitary algebra and image subalgebra. Additionally, if $U$ is in the image subalgebra, the adjoint will also describe the same physical system. Finding the inverse in this representation is easier and this is the path we choose.  

The adjoint map is conceptually similar to computing the evolution in the Heisenberg picture. The terms $\hat{a}^{\dag}_j  \hat{a}_l  $ in the effective Hamiltonian given by Eq. (\ref{HUpq}) evolves under the action of the adjoint as
\begin{equation}
U \hat{a}^{\dag}_j  \hat{a}_l U^\dag=U \hat{a}^{\dag}_j U^\dag U \hat{a}_l U^\dag ,
\end{equation}
which, for the definition we use of the adjoint, is to the product of the evolution of the corresponding creation and destruction operators under $U^{-1}=U^\dag$ in the Heisenberg picture. 

In our derivation, we use the fact that, for linear transformations, we can relate the adjoint representations of $S$ and $U=\varphi_{m,M}(S)$ so that
\begin{equation}
\label{homAdj}
\mathrm{Ad}_S(v) = d\varphi_{m,M}^{-1}(\mathrm{Ad}_U(d \varphi_{m,M}(v))), 
\end{equation}
for any $v \in \mathfrak{u}(m)$.

\subsection{Bases for the $\mathfrak{u}(m)$ algebra and the image subalgebra}\label{algbasis}
Consider the canonical basis $\{\ket{1}=\ket{1,0,\ldots , 0},\ket{2}=\ket{0,1,\ldots, 0}, \ldots , \ket{m}=\ket{0,\ldots, 0, 1}  \}$ of $\mathbb{C}^m$. The matrices
\begin{align}
\label{basisdef}
e_{jk}:=& \frac{i}{2}\big (\ket{j}\bra{k}+\ket{k}\bra{j}\big ) \\
f_{jk}:=& \frac{1}{2} \big (\ket{j}\bra{k}-\ket{k}\bra{j}\big ) \nonumber
\end{align}
give a basis of $\mathfrak{u}(m)$. The real linear combinations of the matrices
\begin{align}
\label{basis}
e_{jk} \quad \mbox{for } \, k\leq j=1,\ldots , m\\
f_{jk} \quad \mbox{for } \, k< j=1,\ldots , m,\nonumber
\end{align}
give any desired antihermitian matrix in the algebra.

Observe that $e_{jk}=e_{kj}$ and $f_{jk}=-f_{kj}$. From Eq. (\ref{HUpq}), we see the basis of $\mathfrak{u}(m)$ transforms into
\begin{align*}
d \varphi_{m,M}(e_{jk})=& \frac{i}{2}\big (\hat{a}^\dag_j \hat{a}_k+\hat{a}^\dag_k \hat{a}_j\big ) \neq 0\\
d \varphi_{m,M}(f_{jk})=& \frac{1}{2} \big (\hat{a}^\dag_j \hat{a}_k-\hat{a}^\dag_k \hat{a}_j\big ) \neq 0,
\end{align*}
and therefore the map $d \varphi_{m,M}$ is injective and gives a basis of the image subalgebra.

\subsection{Example space: 5 photons in 2 modes}
\label{exSpace}
To illustrate our results, we will give a few examples using linear interferometers with $m=2$ modes and $n=5$ input photons, for which we have a Hilbert space of dimension $M= {2+5-1\choose 5}=6$. We choose the basis
\begin{equation}
\label{c6basis}
\{\ket{5,0},\ket{4,1},\ket{3,2},\ket{2,3},\ket{1,4},\ket{0,5}\}
\end{equation}
of $\mathbb{C}^6$. For our reference bases, the $i$th basis element corresponds to a column vector filled with zeros except for a single 1 in the $i$th row.

The (2,6)-photonic homomorphism will be denoted as $\varphi_{2,6}:U(2)\to U(6)$. The basis for $\mathbb{C}^2$ will be
\begin{equation}
\label{c2basis}
\left\{\vert 1,0\rangle,\quad \vert 0,1\rangle \right\}.
\end{equation}

We study how ${\rm Ad}_U$ acts on $\dd=d\varphi_{2,6}(\mathfrak{u}(2))$ using the basis $\{e_{11},e_{12},e_{22},f_{12}\}$ of $\mathfrak{u}(2)$,
\begin{equation} \label{u2basis}
\begin{array}{ll}
e_{11}=i\begin{pmatrix}1&0\\
0&0
\end{pmatrix} & e_{12}=\frac{i}{2}\begin{pmatrix}0&1\\
1&0
\end{pmatrix} \\
e_{22}=i\begin{pmatrix}0&0\\
0&1\end{pmatrix} & f_{12}=\frac{1}{2}\begin{pmatrix}0&1\\
-1&0
\end{pmatrix}.
\end{array}
\end{equation}

The $U$ and $iH_U$ matrices in these examples are given for the state order of the basis in Eq. (\ref{c6basis}), i.e. $U_{11}=\bra{1}U\ket{1}=\bra{50}U\ket{50}$, $\ldots$, $U_{32}=\bra{3}U\ket{2}=\bra{32}U\ket{41}$, $\ldots$, $U_{66}=\bra{6}U\ket{6}=\bra{05}U\ket{05}$. The same applies to $S$ and $iH_S$ and the basis in Eq. (\ref{c2basis}).

For a given $iH_S\in \mathfrak{u}(2)$, the corresponding element in the image subalgebra $\mathfrak{d}$, $iH_U=d\varphi_{2,6}(iH_S)$, is given by \cite{GGM18}:
\begin{equation}
\bra{p}iH_U\ket{q}=\bra{p}\sum_{l=1}^{m}\sum_{j=1}^{m} iH_{Sjl} \hat{a}^{\dag}_j  \hat{a}_l \ket{q},
\end{equation}

The matrices $\{a_i\}$ in the basis of $\mathfrak{u}(2)$, ordered as in Eq. (\ref{u2basis}), give us a basis $\{b_1,b_2,b_3,b_4\}$ of the image subspace $\dd\subseteq \mathfrak{u}(M)$, with $b_i=d\varphi_{2,6}(a_i)$ where

\begin{align}
b_1:=&d\varphi_{2,6}(e_{11})=i\hat{a}^\dag_1\hat{a}_1=i\hat{n}_1 \nonumber \\ 
=&i\begin{pmatrix}
5&0&0&0&0&0\\
0&4&0&0&0&0\\
0&0&3&0&0&0\\
0&0&0&2&0&0\\
0&0&0&0&1&0\\
0&0&0&0&0&0
\end{pmatrix}.
\end{align}

\begin{align}
b_2:=&d\varphi_{2,6}(e_{12})=\frac{i}{2}(\hat{a}^\dag_1\hat{a}_2+\hat{a}^\dag_2\hat{a}_1)  \nonumber \\ 
=&\frac{i}{2} \left(
\begin{array}{cccccc}
 0 & \sqrt{5} & 0 & 0 & 0 & 0 \\
 \sqrt{5} & 0 & 2 \sqrt{2} & 0 & 0 & 0 \\
 0 & 2 \sqrt{2} & 0 & 3 & 0 & 0 \\
 0 & 0 & 3 & 0 & 2 \sqrt{2} & 0 \\
 0 & 0 & 0 & 2 \sqrt{2} & 0 & \sqrt{5} \\
 0 & 0 & 0 & 0 & \sqrt{5} & 0 \\
\end{array}
\right).
\end{align}

\begin{align}
b_3:=&d\varphi_{2,6}(e_{22})=i\hat{a}^\dag_2\hat{a}_2=i\hat{n}_2\nonumber \\=&i\begin{pmatrix}
0&0&0&0&0&0\\
0&1&0&0&0&0\\
0&0&2&0&0&0\\
0&0&0&3&0&0\\
0&0&0&0&4&0\\
0&0&0&0&0&5
\end{pmatrix}.
\end{align}

\begin{align}
b_4:= &d\varphi_{2,6}(f_{12})=\frac{1}{2}\left(\hat{a}^\dag_1\hat{a}_2-\hat{a}^\dag_2\hat{a}_1\right) \nonumber \\ 
=&\frac{1}{2} \left(
\begin{array}{cccccc}
 0 &  \sqrt{5} & 0 & 0 & 0 & 0 \\
 -\sqrt{5} & 0 & 2  \sqrt{2} & 0 & 0 & 0 \\
 0 & -2 \sqrt{2} & 0 & 3  & 0 & 0 \\
 0 & 0 & -3  & 0 & 2  \sqrt{2} & 0 \\
 0 & 0 & 0 & -2  \sqrt{2} & 0 &  \sqrt{5} \\
 0 & 0 & 0 & 0 & - \sqrt{5} & 0 \\
\end{array}
\right).
\end{align}

\section{Existence of a unitary evolution via linear optics}\label{achievement}
The adjoint representation gives us a necessary and sufficient condition for the implementation of any given unitary operator $U$ with linear interferometers. 

\begin{theorem}\label{teo2.3}
	$U \in \mathrm{im} \varphi_{m,M} \Longleftrightarrow \mathrm{Ad}_U\mid_{\mathfrak{d}}$ is an automorphism.
\end{theorem}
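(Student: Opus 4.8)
The statement is an equivalence, and the plan is to prove the two implications separately. For the forward direction, suppose $U=\varphi_{m,M}(S)$. I would invoke the naturality relation \eqref{homAdj}, rewritten as $\mathrm{Ad}_U\circ d\varphi_{m,M}=d\varphi_{m,M}\circ\mathrm{Ad}_S$; applying both sides to all of $\mathfrak{u}(m)$ and using that $\mathrm{Ad}_S$ is a bijection of $\mathfrak{u}(m)$ gives $\mathrm{Ad}_U(\dd)=d\varphi_{m,M}(\mathrm{Ad}_S(\mathfrak{u}(m)))=\dd$, so $\mathrm{Ad}_U$ restricts to a linear bijection of $\dd$, and it preserves the bracket because $\mathrm{Ad}_U$ does on $\mathfrak{u}(M)$. (Since $\mathrm{Ad}_U$ is always an injective bracket-preserving map of $\mathfrak{u}(M)$, the genuine content of the hypothesis in the converse is just the inclusion $\mathrm{Ad}_U(\dd)\subseteq\dd$.)

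For the converse, assume $\mathrm{Ad}_U(\dd)=\dd$. Since $d\varphi_{m,M}\colon\mathfrak{u}(m)\to\dd$ is a Lie-algebra isomorphism (Section \ref{algbasis}), I would transport $\mathrm{Ad}_U|_\dd$ back to an automorphism $\psi:=(d\varphi_{m,M})^{-1}\circ\mathrm{Ad}_U|_\dd\circ d\varphi_{m,M}$ of $\mathfrak{u}(m)$, so that $\mathrm{Ad}_U\circ d\varphi_{m,M}=d\varphi_{m,M}\circ\psi$. Two remarks then reduce the problem to showing $\psi$ is \emph{inner}. First, $\psi$ fixes the centre of $\mathfrak{u}(m)$ pointwise: by \eqref{HUpq} one has $d\varphi_{m,M}(iI_m)=i\sum_k\hat{a}^{\dagger}_k\hat{a}_k=in\,I_M$, which is central in $\mathfrak{u}(M)$ and hence fixed by $\mathrm{Ad}_U$. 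Second, therefore $\psi$ restricts to an automorphism $\bar\psi$ of $\mathfrak{su}(m)=[\mathfrak{u}(m),\mathfrak{u}(m)]$, and if $\bar\psi=\mathrm{Ad}_{S_0}|_{\mathfrak{su}(m)}$ for some $S_0\in U(m)$ then $\psi=\mathrm{Ad}_{S_0}$ on all of $\mathfrak{u}(m)$.

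To see $\bar\psi$ is inner I would use a self-duality obstruction. For $m\le 2$ there is nothing to prove ($\mathfrak{su}(1)=0$, and $\mathfrak{su}(2)$ has no outer automorphisms), so assume $m\ge 3$, where $\mathrm{Aut}(\mathfrak{su}(m))$ modulo $\mathrm{Inn}(\mathfrak{su}(m))$ is generated by the $A_{m-1}$ diagram automorphism (complex conjugation). From $\mathrm{Ad}_U\circ d\varphi_{m,M}=d\varphi_{m,M}\circ\psi$ the invertible matrix $U$ intertwines the two $\mathfrak{su}(m)$-representations $X\mapsto d\varphi_{m,M}(X)$ and $X\mapsto d\varphi_{m,M}(\psi(X))=U\,d\varphi_{m,M}(X)\,U^{\dagger}$ on $\mathbb{C}^M=\mathcal{H}_{m,n}$, so these are isomorphic. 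The first is $\mathrm{Sym}^n\mathbb{C}^m$, irreducible with highest weight $n\omega_1$; were $\bar\psi$ outer, modulo an inner automorphism it would be the diagram automorphism, and the second representation would be isomorphic to $(\mathrm{Sym}^n\mathbb{C}^m)^{*}$, with highest weight $n\omega_{m-1}\ne n\omega_1$ — a contradiction for $n\ge 1$. Hence $\psi=\mathrm{Ad}_{S_0}$ for some $S_0\in U(m)$. Setting $V:=\varphi_{m,M}(S_0)$, the forward direction gives $\mathrm{Ad}_V\circ d\varphi_{m,M}=d\varphi_{m,M}\circ\psi=\mathrm{Ad}_U\circ d\varphi_{m,M}$, so $V^{-1}U$ commutes with every $d\varphi_{m,M}(X)$, hence with every $\varphi_{m,M}(\exp X)$, that is, with all of $\mathrm{im}\,\varphi_{m,M}$. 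As $\mathbb{C}^M=\mathrm{Sym}^n\mathbb{C}^m$ is an irreducible $U(m)$-module, Schur's lemma forces $V^{-1}U=e^{i\theta}I_M$; and since \eqref{HeisenbergU} yields $\varphi_{m,M}(e^{i\alpha}S_0)=e^{in\alpha}\varphi_{m,M}(S_0)$ (the $n$ creation operators each pick up the phase), we get $U=\varphi_{m,M}(e^{i\theta/n}S_0)\in\mathrm{im}\,\varphi_{m,M}$, as desired.

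The routine parts — the forward implication, the centre computation, Schur's lemma, and the phase bookkeeping — are straightforward. The one delicate point, where I expect the real work to be, is showing that the hypothesis forces $\bar\psi$ to be \emph{inner} rather than merely an abstract Lie-algebra automorphism of $\dd\cong\mathfrak{u}(m)$; equivalently, that $\mathrm{Ad}_U$ cannot realise the outer complex-conjugation symmetry. I would pin this down exactly through the non-self-duality of $\mathrm{Sym}^n\mathbb{C}^m$ for $m\ge 3$, $n\ge 1$ (a bona fide conjugation $\mathrm{Ad}_U$ on $\mathbb{C}^M$ would produce an isomorphism $\mathrm{Sym}^n\mathbb{C}^m\cong(\mathrm{Sym}^n\mathbb{C}^m)^{*}$, which does not exist); a concrete substitute is to note that the spectrum of $d\varphi_{m,M}(e_{11})$ is $\{0,i,2i,\dots,ni\}$ with multiplicities, which is not invariant under negation when $n\ge 1$, whereas an outer $\bar\psi$ would force it to be. (The hypothesis $n\ge 1$ is genuinely needed: for $n=0$ one has $M=1$ and $\dd=0$, so the criterion degenerates.)
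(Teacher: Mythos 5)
Your proposal is correct, and the converse is argued by a genuinely different (and self-contained) route. The forward direction is essentially equivalent to the paper's: you use the naturality square $\mathrm{Ad}_U\circ d\varphi_{m,M}=d\varphi_{m,M}\circ\mathrm{Ad}_S$ from Eq.~\eqref{homAdj}, while the paper expands $\mathrm{Ad}_{\exp(iH_U)}=\exp(\mathrm{ad}_{iH_U})$ and uses closedness of the bracket in $\dd$; both are routine. The real divergence is in the converse. The paper first splits off the centre via Lemma~\ref{tr} ($\dd=\sd\oplus\mathrm{span}_{\mathbb{R}}(iI_M)$) and Lemma~\ref{lema:sd}, shows that $e^{i\theta}U$ normalizes $\varphi_{m,M}(SU(m))$ inside $SU(M)$, and then \emph{cites} Lemma~9 of \cite{OZ17} to conclude that this normalizer is $\varphi_{m,M}(SU(m))$ up to phases. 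You instead transport $\mathrm{Ad}_U|_{\dd}$ to an automorphism $\psi$ of $\mathfrak{u}(m)$, observe it fixes the centre and restricts to $\mathfrak{su}(m)$, and prove directly that it must be \emph{inner} because an outer automorphism (the $A_{m-1}$ diagram automorphism) would force $\mathrm{Sym}^n\mathbb{C}^m\cong(\mathrm{Sym}^n\mathbb{C}^m)^*$, contradicting $n\omega_1\neq n\omega_{m-1}$ for $m\geq 3$, $n\geq 1$; Schur's lemma and the phase identity $\varphi_{m,M}(e^{i\alpha}S)=e^{in\alpha}\varphi_{m,M}(S)$ then finish the argument. This is precisely the mathematical content that the paper outsources to \cite{OZ17} (whose normalizer lemma rests on the same self-duality obstruction), so the two proofs converge on the same underlying fact; yours buys a self-contained argument and makes explicit where $n\geq 1$ and the non-self-duality of the symmetric power enter, at the cost of invoking the classification $\mathrm{Out}(\mathfrak{su}(m))\cong\mathbb{Z}/2$ and highest-weight theory, whereas the paper stays with elementary manipulations of $\sd$ plus a citation. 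Two cosmetic cautions: your ``concrete substitute'' via the spectrum of $d\varphi_{m,M}(e_{11})$ needs the eigenvalue \emph{multiplicities} (and the central shift $e_{11}-\tfrac{i}{m}I_m$, since $e_{11}\notin\mathfrak{su}(m)$) to be airtight, as the bare eigenvalue set can be negation-symmetric after recentering; and the identification of $d\varphi_{m,M}$ with the derivative of $\mathrm{Sym}^n$ of the defining representation, while standard, is worth a sentence since the whole weight-theoretic argument hangs on it.
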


Theorem \ref{teo2.3} gives a criterion to decide whether a matrix $U\in U(M)$ is an $(n,m)$-optical realization or not. A quantum operation $U$ can be implemented with linear optics if and only if $\mathrm{Ad}_U\mid_{\mathfrak{d}}$ is an automorphism (for any $v$ in the image subalgebra $\mathfrak{d}$, the adjoint $U v U^\dag$ remains in the subalgebra). We only need to see this is the case for a basis of $\mathfrak{d}$, which can be obtained by transforming the elements of a basis of $\mathfrak{u}(m)$ by the algebra homomorphism $d \varphi_{m,M}$.

We can choose a basis of $\mathfrak{u}(m)$ $\{a_1,\cdots,a_m\}$ following Equations (\ref{basisdef}) and (\ref{basis}) so that any element in the algebra can be expressed as a real linear combination of the $a_i$ matrices. For that basis, the matrices $b_i=d\varphi_{m,M}(a_i)$ form a basis for $\mathfrak{d}$.

$U$ is an optical realization if and only if, for any $v$ that is a real linear combination of the computed $b_i$, we can also write $U v U^\dag$ in the same basis. There must exist real coefficients $X_{ij}$ such that
\begin{equation}
\label{system}
U b_i U^\dag=\sum_{j=1}^{m^2} X_{ij}b_j,\quad i=1,\ldots , m^2.
\end{equation}
In order to check whether a given $U$ can be realized with linear optics we need to satisfy $m^2$ equations with $M\times M$ complex matrices, one for each element of the basis, for a total of $m^2M^2$ independent real equations \footnote{Each of the $M\times M$ matrices has $2M^2$ real parameters, but they are antihermitian leaving only $M^2$ independent real values.} with $m^4$ indeterminates. If the system is consistent, $U$ is an $(n,m)$-optical realization. 

\subsection{Example 1: An impossible operation}
\label{ex2}
In our example state space, see Section \ref{exSpace}, in order to decide whether a matrix $U$ is a $(2,6)$-optical realization or not, by Theorem 1, we have to compute $\Ad_U$ and see if $\Ad_U(v)\in \dd$ for any $v \in \dd$. 

The adjoint is linear and it is enough to verify the property for the vectors in the basis $\{b_1,b_2,b_3,b_4\}$ of $\dd$. This leads to a real linear system with $2^2 \cdot 6^2$  equations
\begin{equation}\label{eq:sistema}
\Ad_U(b_j)=\sum_{k=1}^4 X_{jk} b_k,\quad j=1,\ldots , 4
\end{equation}
in the $2^4$ indeterminates $X_{jk}$ belonging to $\mathbb{R}$. If the system is consistent, then $U$ is a $(2,6)$-photonic realization.
\medskip

First, we are going to use Theorem \ref{teo2.3} to show that not every unitary $6\times 6$-matrix is a $(2,6)$-optical realization. If we take the matrix
\begin{equation}
U=\left(
\begin{array}{cccccc}
0& 0&0  &1 & 0 &0 \\
0& 1&0  &0 &0  &0 \\
0&0 &1  &0 & 0 &0 \\
1&0 &0  &0 &0  &0 \\
0& 0& 0 &0 &1  &0 \\
0& 0&  0& 0 &0  &1 \\
\end{array}
\right),
\end{equation}
the system (\ref{eq:sistema}) is inconsistent. Consider, for instance, Eq. (\ref{eq:sistema}) for $b_2$:
\begin{equation}
Ub_2U^\dag=X_{21}b_1+X_{22}b_2+X_{23}b_3+X_{24}b_4,
\end{equation}
which, in matrix form, is
\begin{widetext}
\tiny
\begin{equation}
\left(\begin{matrix}0 & 0 & \frac{3 i}{2} & 0 & \sqrt{2} i & 0\\0 & 0 & \sqrt{2} i & \frac{\sqrt{5} i}{2} & 0 & 0\\\frac{3 i}{2} & \sqrt{2} i & 0 & 0 & 0 & 0\\0 & \frac{\sqrt{5} i}{2} & 0 & 0 & 0 & 0\\\sqrt{2} i & 0 & 0 & 0 & 0 & \frac{\sqrt{5} i}{2}\\0 & 0 & 0 & 0 & \frac{\sqrt{5} i}{2} & 0\end{matrix}\right)=\left(\begin{matrix}5 i X_{21} & \frac{i X_{22}}{2} \sqrt{5} + \frac{\sqrt{5} X_{24}}{2} & 0 & 0 & 0 & 0\\\frac{i X_{22}}{2} \sqrt{5} - \frac{\sqrt{5} X_{24}}{2} & 4 i X_{21} + i X_{23} & \sqrt{2} i X_{22} + \sqrt{2} X_{24} & 0 & 0 & 0\\0 & \sqrt{2} i X_{22} - \sqrt{2} X_{24} & 3 i X_{21} + 2 i X_{23} & \frac{3 i}{2} X_{22} + \frac{3 X_{24}}{2} & 0 & 0\\0 & 0 & \frac{3 i}{2} X_{22} - \frac{3 X_{24}}{2} & 2 i X_{21} + 3 i X_{23} & \sqrt{2} i X_{22} + \sqrt{2} X_{24} & 0\\0 & 0 & 0 & \sqrt{2} i X_{22} - \sqrt{2} X_{24} & i X_{21} + 4 i X_{23} & \frac{i X_{22}}{2} \sqrt{5} + \frac{\sqrt{5} X_{24}}{2}\\0 & 0 & 0 & 0 & \frac{i X_{22}}{2} \sqrt{5} - \frac{\sqrt{5} X_{24}}{2} & 5 i X_{23}\end{matrix}\right).
\end{equation}
\end{widetext}
\normalsize

The system is clearly inconsistent. In the first row we see two constants, $\frac{3 i}{2} $ and $\sqrt{2} i$, which should be equal to zero, which is impossible. 

Therefore, since $\Ad_U$ is not an automorphism of $\dd$, we have that $U\notin \mathrm{Im}(\varphi_{2,6})$ by Theorem 1 and $U$ is not a $(2,6)$-optical realization.

Another way to show that inconsistency is noticing that, if $v\in\dd$, then $\langle n_1', n'_2 \vert v \vert n_1,n_2 \rangle \neq 0$ implies that the input state $\vert n_1,n_2 \rangle$ is, at most, one photon away from the output state $\vert n'_1,n'_2 \rangle$ (cf. Eq. (\ref{HUpq}) and \cite{GGM18}).

This is not the case for the given $U$ and our basis order in Eq. (\ref{c6basis}): notice that $\vert 2,3 \rangle$ is two photons away from $\vert 4,1 \rangle $, but 
\begin{align}
\langle 4,1 \vert \Ad_U(b_2) \vert 2,3 \rangle =& \langle 4,1 \vert U b_2 U^\dag \vert 2,3 \rangle \nonumber \\
=&\langle 4,1 \vert b_2 \vert 5,0 \rangle =\frac{\sqrt{5}}{2}i \neq 0.
\end{align}

\subsection{Example 2: An optical realization}
\label{ex3}
Continuing with \ref{ex2}, let us show that for the operator
\begin{equation}
\label{Upossible}
U=\left(\begin{matrix}\frac{\sqrt{2}}{8} & \frac{\sqrt{10}}{8} & \frac{\sqrt{5}}{4} & \frac{\sqrt{5}}{4} & \frac{\sqrt{10}}{8} & \frac{\sqrt{2}}{8}\\\frac{\sqrt{10}}{8} & \frac{3 \sqrt{2}}{8} & \frac{1}{4} & - \frac{1}{4} & - \frac{3 \sqrt{2}}{8} & - \frac{\sqrt{10}}{8}\\\frac{\sqrt{5}}{4} & \frac{1}{4} & - \frac{\sqrt{2}}{4} & - \frac{\sqrt{2}}{4} & \frac{1}{4} & \frac{\sqrt{5}}{4}\\\frac{\sqrt{5}}{4} & - \frac{1}{4} & - \frac{\sqrt{2}}{4} & \frac{\sqrt{2}}{4} & \frac{1}{4} & - \frac{\sqrt{5}}{4}\\\frac{\sqrt{10}}{8} & - \frac{3 \sqrt{2}}{8} & \frac{1}{4} & \frac{1}{4} & - \frac{3 \sqrt{2}}{8} & \frac{\sqrt{10}}{8}\\\frac{\sqrt{2}}{8} & - \frac{\sqrt{10}}{8} & \frac{\sqrt{5}}{4} & - \frac{\sqrt{5}}{4} & \frac{\sqrt{10}}{8} & - \frac{\sqrt{2}}{8}\end{matrix}\right),
\end{equation}
there exists a matrix $S$ in $U(2)$ such that $\varphi_{2,6}(S)=U$.

We start by solving the system in Eq. (\ref{eq:sistema}). We write one matrix identity for each element in the $\{b_i\}$ basis. For instance, for $b_1$:
\begin{equation}
Ub_1U^\dag=X_{11}b_1+X_{12}b_2+X_{13}b_3+X_{14}b_4,
\end{equation}
becomes
\begin{widetext}
\tiny
\begin{equation}
\left(\begin{matrix}\frac{5 i}{2} & \frac{\sqrt{5} i}{2} & 0 & 0 & 0 & 0\\\frac{\sqrt{5} i}{2} & \frac{5 i}{2} & \sqrt{2} i & 0 & 0 & 0\\0 & \sqrt{2} i & \frac{5 i}{2} & \frac{3 i}{2} & 0 & 0\\0 & 0 & \frac{3 i}{2} & \frac{5 i}{2} & \sqrt{2} i & 0\\0 & 0 & 0 & \sqrt{2} i & \frac{5 i}{2} & \frac{\sqrt{5} i}{2}\\0 & 0 & 0 & 0 & \frac{\sqrt{5} i}{2} & \frac{5 i}{2}\end{matrix}\right)
=\left(\begin{matrix}5 i X_{11} & \frac{i X_{12}}{2} \sqrt{5} + \frac{\sqrt{5} X_{14}}{2} & 0 & 0 & 0 & 0\\\frac{i X_{12}}{2} \sqrt{5} - \frac{\sqrt{5} X_{14}}{2} & 4 i X_{11} + i X_{13} & \sqrt{2} i X_{12} + \sqrt{2} X_{14} & 0 & 0 & 0\\0 & \sqrt{2} i X_{12} - \sqrt{2} X_{14} & 3 i X_{11} + 2 i X_{13} & \frac{3 i}{2} X_{12} + \frac{3 X_{14}}{2} & 0 & 0\\0 & 0 & \frac{3 i}{2} X_{12} - \frac{3 X_{14}}{2} & 2 i X_{11} + 3 i X_{13} & \sqrt{2} i X_{12} + \sqrt{2} X_{14} & 0\\0 & 0 & 0 & \sqrt{2} i X_{12} - \sqrt{2} X_{14} & i X_{11} + 4 i X_{13} & \frac{i X_{12}}{2} \sqrt{5} + \frac{\sqrt{5} X_{14}}{2}\\0 & 0 & 0 & 0 & \frac{i X_{12}}{2} \sqrt{5} - \frac{\sqrt{5} X_{14}}{2} & 5 i X_{13}\end{matrix}\right),
\end{equation}
\normalsize
\end{widetext}
from which we see $X_{11}=\frac{1}{2}$, $X_{12}=1$, $X_{13}=\frac{1}{2}$ and $X_{14}=0$. If we repeat the same operation for each $b_i$ in the basis, we see the system (\ref{eq:sistema}) is consistent and its solutions are given by the matrix
\begin{equation}\label{eq:X}
X=(X_{jk})=\left(
\begin{array}{cccccc}
 1/2 &1 & 1/2 & 0 \\
 1/2& 0 & -1/2 & 0  \\
 1/2 & -1& 1/2 & 0  \\
 0 & 0 & 0 & -1 
\end{array}
\right).
\end{equation}

Therefore, since $\Ad_U$ is a linear map so that $\Ad_U:d\varphi_{2,6}(\mathfrak{u}(2))\to d\varphi_{2,6}(\mathfrak{u}(2))$, we know (by Theorem \ref{teo2.3}) that there exists at least one $S\in U(2)$ such that $
\varphi_{2,6}(S)=U$.

\section{Implementation of the possible operations using linear optics}
\label{construction}
When the operation can be implemented, we can give an explicit implementation which uses the solution to the system of equations. Let $U\in U(M)$, then Theorem \ref{teo2.3} shows that, if $\mathrm{Ad}_U\mid_{\mathfrak{d}}$ is an automorphism, there exists an $S\in U(m)$ such that $\varphi_{m,M}(S)=U$. The goal is to obtain an algorithm which provides this $S$. 

\begin{theorem}\label{teo3.1}
For some $S=\sum_{\ell j} S_{\ell j}\ket{\ell }\bra{j}\in U(m)$, let $\mathrm{Ad}_S:\mathfrak{u}(m)\to \mathfrak{u}(m)$ be the adjoint map, then
there exist $\ell_0,j_0$ such that $-i\bra{\ell_0}\mathrm{Ad}_S(e_{j_0 j_0}) \ket{\ell_0}=\vert S_{\ell_0 j_0}\vert^2 \neq 0$ and 
\begin{equation}
S=e^{i \theta}\sum_{\ell,j}\frac{ \bra{\ell}\mathrm{Ad}_S(f_{jj_0}) \ket{\ell_0}-i\bra{\ell}\mathrm{Ad}_S(e_{jj_0}) \ket{\ell_0}}   
{\sqrt{-i\bra{\ell_0}\mathrm{Ad}_S(e_{j_0 j_0}) \ket{\ell_0}}}\ket{\ell}\bra{j},
\end{equation}
with $\theta \in \mathbb{R}$.
\end{theorem}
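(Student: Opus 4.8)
The plan is to prove the theorem by a direct computation of the matrix entries of $\mathrm{Ad}_S$ on the basis elements $e_{j_0 j_0}$, $e_{j j_0}$ and $f_{j j_0}$, expressing everything in terms of the entries $S_{\ell j}=\bra{\ell}S\ket{j}$ of the scattering matrix, and then recognizing that the combination in the statement reconstructs $S$ up to a global phase $e^{i\theta}$.

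First I would settle the existence of the pair $(\ell_0,j_0)$. Since $S\in U(m)$, every column of $S$ is a unit vector, so fixing any index $j_0$ we have $\sum_{\ell=1}^{m}|S_{\ell j_0}|^2=1$ and hence there is some $\ell_0$ with $S_{\ell_0 j_0}\neq 0$. Using $e_{j_0 j_0}=i\ket{j_0}\bra{j_0}$, $\mathrm{Ad}_S(v)=SvS^\dagger$ and $\bra{j_0}S^\dagger\ket{\ell_0}=\overline{S_{\ell_0 j_0}}$, a one-line computation gives
\[
-i\bra{\ell_0}\mathrm{Ad}_S(e_{j_0 j_0})\ket{\ell_0}=S_{\ell_0 j_0}\,\overline{S_{\ell_0 j_0}}=|S_{\ell_0 j_0}|^2\neq 0,
\]
so the square root in the statement is a well-defined positive real number.

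Next I would compute, for arbitrary $\ell,j$, the numerator entries. Inserting the definitions (\ref{basisdef}) into $\mathrm{Ad}_S(v)=SvS^\dagger$ and using $\bra{\ell}S\ket{k}=S_{\ell k}$, $\bra{k}S^\dagger\ket{\ell_0}=\overline{S_{\ell_0 k}}$ yields
\[
\bra{\ell}\mathrm{Ad}_S(e_{jj_0})\ket{\ell_0}=\tfrac{i}{2}\big(S_{\ell j}\overline{S_{\ell_0 j_0}}+S_{\ell j_0}\overline{S_{\ell_0 j}}\big),\qquad
\bra{\ell}\mathrm{Ad}_S(f_{jj_0})\ket{\ell_0}=\tfrac{1}{2}\big(S_{\ell j}\overline{S_{\ell_0 j_0}}-S_{\ell j_0}\overline{S_{\ell_0 j}}\big).
\]
The key observation is that the specific combination in the theorem cancels the ``unwanted'' cross term:
\[
\bra{\ell}\mathrm{Ad}_S(f_{jj_0})\ket{\ell_0}-i\,\bra{\ell}\mathrm{Ad}_S(e_{jj_0})\ket{\ell_0}=S_{\ell j}\,\overline{S_{\ell_0 j_0}}.
\]

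Finally, dividing by $\sqrt{|S_{\ell_0 j_0}|^2}=|S_{\ell_0 j_0}|$ and writing $S_{\ell_0 j_0}=|S_{\ell_0 j_0}|e^{i\theta}$ for the argument $\theta\in\mathbb{R}$ (well defined since $S_{\ell_0 j_0}\neq 0$), each summand becomes $e^{-i\theta}S_{\ell j}$, so that
\[
e^{i\theta}\sum_{\ell,j}\frac{\bra{\ell}\mathrm{Ad}_S(f_{jj_0})\ket{\ell_0}-i\,\bra{\ell}\mathrm{Ad}_S(e_{jj_0})\ket{\ell_0}}{\sqrt{-i\bra{\ell_0}\mathrm{Ad}_S(e_{j_0 j_0})\ket{\ell_0}}}\ket{\ell}\bra{j}
=e^{i\theta}e^{-i\theta}\sum_{\ell,j}S_{\ell j}\ket{\ell}\bra{j}=S,
\]
which is the claimed identity. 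There is no serious obstacle: the argument is an elementary index computation, and the only point requiring care is the handling of the square root and the phase — one must take the positive real root $|S_{\ell_0 j_0}|$ and absorb the argument of $S_{\ell_0 j_0}$ into $\theta$; a different admissible choice of $(\ell_0,j_0)$ simply changes $\theta$, which is why the statement only asserts the existence of such a $\theta$.
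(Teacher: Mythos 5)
Your proof is correct and follows essentially the same route as the paper: compute $\bra{\ell}\mathrm{Ad}_S(e_{jk})\ket{h}$ and $\bra{\ell}\mathrm{Ad}_S(f_{jk})\ket{h}$ explicitly in terms of the entries of $S$, observe that the combination $f-ie$ isolates $S_{\ell j}S_{\ell_0 j_0}^*$, and divide by $|S_{\ell_0 j_0}|$ while absorbing the argument of $S_{\ell_0 j_0}$ into $\theta$. Your handling of the existence of $(\ell_0,j_0)$ via unit columns and of the phase bookkeeping is, if anything, slightly more explicit than the paper's.
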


All the relevant adjoint operators can be written from $\mathrm{Ad}_S(a_i) $ for the elements $a_i$ of the basis of $\mathfrak{u}(m)$ and the desired $U$. 

Notice that
\begin{align}
d\varphi_{m,M}(\Ad_S(a_i))&=& \nonumber\\
\Ad_U(d\varphi_{m,M}(a_i)))&=&\sum_{j=1}^{m^2} X_{ij}d\varphi_{m,M}(a_i)
\end{align}
for the $X_{ij}$ from the system in Equation (\ref{system}), which must be consistent (otherwise we know the operation cannot be realized). Both $d\varphi_{m,M}$ and $d\varphi_{m,M}^{-1}$ are linear and
\begin{equation}
\label{adjointS}
\Ad_S(a_i)=d\varphi_{m,M}^{-1}(\sum_{j=1}^{m^2} X_{ij}d\varphi_{m,M}(a_i))=\sum_{j=1}^{m^2} X_{ij}a_i
\end{equation}
where all the $X_{ij}$ and $a_i$ are known. 

\subsection{Implementation recipe}\label{recipe}
Given an operator $U$, we first solve the system in Eq. (\ref{system}) (or say it cannot be realized if it is inconsistent). 

Then, we try different integer pairs $\ell, j$ in
\begin{equation}
\label{elements}
\vert S_{\ell j} \vert^2=-i\bra{\ell}\mathrm{Ad}_S(e_{jj}) \ket{\ell}.
\end{equation}
If the chosen $|S_{\ell j}|^2=0$ we have one element of $S$. We continue until we find a pair $\ell_0,j_0$ which gives a nonzero element of $S$. This $S_{\ell_0 j_0}=e^{i\theta}|S_{\ell_0 j_0}|$ will be our reference. 

We can only compute the modulus, but, if we use the same $S_{\ell_0 j_0}$ for all the $\ell,j$ pairs, all the elements of $S$ will have the same global phase, which can be ignored. Using Eq. (\ref{adjointS}) and Theorem \ref{teo3.1}, we can compute all the elements of a scattering matrix $S$ which realizes the desired operator $U$. The scattering matrix, in turn, can be used to build the desired device with linear optical elements \cite{RZB94,BA14,Saw16}. 

\subsection{Implementation example}
\label{ex4}
In Section \ref{ex3} we proved the existence of a matrix $S\in U(2)$ such that $\varphi_{2,6}(S)=U$ for the unitary operation $U$ in Eq.~(\ref{Upossible}). In order to find this matrix $S$ up to global phase we apply Theorem \ref{teo3.1}.

We first look for a nonzero element of $S$ from Eq.~(\ref{elements}):
\begin{equation}
\vert S_{\ell j} \vert^2=-i\bra{\ell}\mathrm{Ad}_S(e_{jj}) \ket{\ell}.
\end{equation}
We start with $\ell=j=1$
\begin{equation}
\vert S_{11} \vert^2=-i\bra{1}\mathrm{Ad}_S(e_{11}) \ket{1}.
\end{equation}
We also need to use Eq.~(\ref{adjointS}) 
\begin{equation}
\Ad_S(a_i)=d\varphi_{m,M}^{-1}(\sum_{j=1}^{m^2} X_{ij}d\varphi_{m,M}(a_i))=\sum_{j=1}^{m^2} X_{ij}a_i,
\end{equation}
which, for $a_1=e_{11}$ and the matrix with the coefficients of the solution in Eq.~(\ref{eq:X}), gives
\begin{align*}
\Ad_S(e_{11})=&X_{11}e_{11}+X_{12}e_{12}+X_{13}e_{22}+X_{14}f_{12}\\ \nonumber
=& \frac{i}{2} \left(\begin{matrix}1 & 1\\ 1 & 1\end{matrix}\right),
\end{align*}
so that
\begin{equation}
\vert S_{11} \vert^2=\frac{1}{2}.
\end{equation}
We obtain $S_{11}=e^{i\theta}\frac{1}{\sqrt{2}}$. From Theorem \ref{teo3.1}, we see the remaining $S_{\ell j}$ are:
\small
\begin{equation}
S_{\ell j}=\sqrt{2}\exp (i \theta) (\bra{\ell}\mathrm{Ad}_S(f_{j1}) \ket{1}-i\bra{\ell}\mathrm{Ad}_S(e_{j1}) \ket{1}).
\end{equation}
\normalsize

We use the same reference $S_{11}$ to find the rest of the entries in $S$:
\small
\begin{align}
S_{12}&=\sqrt{2}\exp (i \theta) (\bra{1}\mathrm{Ad}_S(f_{21}) \ket{1}-i\bra{1}\mathrm{Ad}_S(e_{21}) \ket{1}),\\
S_{21}&=\sqrt{2}\exp (i \theta) (\bra{2}\mathrm{Ad}_S(f_{11}) \ket{1}-i\bra{2}\mathrm{Ad}_S(e_{11}) \ket{1}),\\
S_{22}&=\sqrt{2}\exp (i \theta) (\bra{2}\mathrm{Ad}_S(f_{21}) \ket{1}-i\bra{2}\mathrm{Ad}_S(e_{21}) \ket{1}).
\end{align}
\normalsize
All the elements can be computed from the basis $\{e_{11},e_{12},e_{22},f_{12}\}$ (remembering $f_{ii}=0$, $e_{jk}=e_{kj}$ and $f_{jk}=-f_{kj}$). Apart from $\Ad_S(e_{11})$, we need the matrices
\begin{align}
\mathrm{Ad}_S(f_{21})&=-\mathrm{Ad}_S(f_{12})\\ \nonumber
=&-(X_{41}e_{11}+X_{42}e_{12}+X_{43}e_{22}+X_{44}f_{12})\\ \nonumber
=&f_{12}=\frac{1}{2}\left(\begin{matrix}0 & 1\\ -1 & 0\end{matrix}\right),
\end{align}
\begin{align}
\mathrm{Ad}_S(e_{21})&=\mathrm{Ad}_S(e_{12})\\ \nonumber
=&X_{21}e_{11}+X_{22}e_{12}+X_{23}e_{22}+X_{24}f_{12}\\ \nonumber
=&\frac{1}{2}e_{11}-\frac{1}{2}e_{22}=\frac{i}{2}\left(\begin{matrix}1 & 0\\ 0 & -1\end{matrix}\right),
\end{align}
which give the solution
\begin{align}
S_{12}&=\sqrt{2}\exp (i \theta) \left(0-i \frac{i}{2}\right)=e^{i\theta}\frac{1}{\sqrt{2}},\\
S_{21}&=\sqrt{2}\exp (i \theta) \left(0-i\frac{i}{2}\right)=e^{i\theta}\frac{1}{\sqrt{2}}, \\
S_{22}&=\sqrt{2}\exp (i \theta) \left(-\frac{1}{2}-0\right) \ket{1})=e^{i\theta}\frac{-1}{\sqrt{2}}.
\end{align}

We can ignore the global phase $\theta$ and obtain the scattering matrix
\begin{equation}
S=\frac{1}{\sqrt{2}}\left(\begin{matrix}1& 1\\
1& -1\end{matrix}\right).
\end{equation}
If we compute $\varphi_{2,6}(S)$ using Eq. (\ref{HeisenbergU}), we can check we get the desired evolution $U$.  
 
Notice that, had we tried to simply take the matrix logarithm of the unitary in Eq.~(\ref{Upossible}) we can obtain results such as  
\footnotesize
\begin{equation}
iH_U\!=\!\left(\!\begin{matrix}1.293 i & \!- 0.621 i & \!- 0.878 i & -\!0.878 i & \!- 0.621 i & \!- 0.278 i\\\!- 0.621 i & 0.738 i & \!- 0.393 i & 0.393 i & 0.833 i & 0.621 i\\\!- 0.878 i & \!- 0.393 i & 2.126 i & 0.555 i & \!- 0.393 i & \!- 0.878 i\\\!- 0.878 i & 0.393 i & 0.555 i & 1.015 i & \!- 0.393 i & 0.878 i\\\!- 0.621 i & 0.833 i & \!- 0.393 i & \!- 0.393 i & 2.404 i & \!- 0.621 i\\\!- 0.278 i & 0.621 i & \!- 0.878 i & 0.878 i & \!- 0.621 i & 1.848 i\end{matrix}\!\right),
\end{equation}
\normalsize
which was computed numerically and is presented rounded to three decimal places. While this is a valid logarithm ($U=e^{iH_U}$), the matrix is not in the image subspace (it has nonzero elements for transitions between states which are more than one photon away). As a result, this logarithm is not compatible with our approach using the basis of $\mathfrak{d}$ to check whether $U$ can be implemented or not.

\section{Proofs of the main results}
\label{proofs}
In this Section, we prove Theorems \ref{teo2.3} and \ref{teo3.1}, which lay the foundations for our results.

Let $\mathrm{Ad}_U:\mathfrak{u}(M)\to \mathfrak{u}(M)$ be the adjoint map defined by $\mathrm{Ad}_U(iH_S)=UiH_SU^{\dag}$ \cite{Hal15}.

We denote by $\mathfrak{d}$ the subalgebra $d \varphi_{m,M}(\mathfrak{u}(m))\subseteq \mathfrak{u}(M)$, and  by $\mathfrak{sd}$ the subalgebra $d \varphi_{m,M}(\mathfrak{su}(m))\subseteq \mathfrak{u}(M)$ where $\mathfrak{su}(m)$ is the special unitary algebra of dimension $m$ which gives by exponentiation the matrices in the special unitary group which describe any quantum evolution for a quantum state up to an unobservable global phase shift. Notice that $d \varphi_{m,M}:\mathfrak{u}(m)\to \dd$ is a bijection.

\subsection{Proof of Theorem \ref{teo2.3}}
 
\begin{lemma}\label{tr}
Let $iH_U=d \varphi_{m,M}(iH_S) \in \mathfrak{d}$ for $iH_S \in \mathfrak{u}(m)$, then
$$
\tr (iH_U) = {n+m-1 \choose n-1}\tr(iH_S).
$$
Therefore, $v\in \mathfrak{sd}$ if and only if $\tr(v)=0$. Moreover
$$
\mathfrak{d}=\mathfrak{sd}\oplus \mathrm{span}_{\mathbb{R}}(i I_M).
$$
\end{lemma}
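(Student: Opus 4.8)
The plan is to first establish the trace identity in the first display, and then read off the remaining two assertions by transporting the standard decomposition $\mathfrak{u}(m)=\mathfrak{su}(m)\oplus\mathrm{span}_{\mathbb{R}}(iI_m)$ across the linear bijection $d\varphi_{m,M}\colon\mathfrak{u}(m)\to\mathfrak{d}$.

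For the trace identity, I would write $iH_S=\sum_{j,l}iH_{Sjl}\ket{j}\bra{l}$, so that by Eq.~(\ref{HUpq}) the image is $iH_U=\sum_{j,l}iH_{Sjl}\,\hat{a}_j^{\dagger}\hat{a}_l$ acting on $\mathcal{H}_{m,n}$, and then take the trace in the photon-number basis. Since $\hat{a}_j^{\dagger}\hat{a}_l$ maps each $\ket{n_1\cdots n_m}$ to a scalar multiple of a state with the $j$-th occupation number raised and the $l$-th lowered, its diagonal vanishes whenever $j\neq l$, so $\tr(iH_U)=\sum_{j=1}^{m}iH_{Sjj}\,\tr(\hat{n}_j)$. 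It then remains to compute $\tr(\hat{n}_j)=\sum_{n_1+\cdots+n_m=n}n_j$; because the index set is invariant under permuting coordinates this is independent of $j$, hence equals $\frac{1}{m}\sum_{n_1+\cdots+n_m=n}(n_1+\cdots+n_m)=\frac{n}{m}M=\frac{n}{m}{m+n-1\choose n}={n+m-1\choose n-1}$, which yields $\tr(iH_U)={n+m-1\choose n-1}\tr(iH_S)$. An alternative route avoiding the count: $\varphi_{m,M}$ is a representation, so $\det\varphi_{m,M}(A)=(\det A)^{c}$ for a fixed integer $c$; evaluating at $A=e^{i\theta}I_m$, where $\varphi_{m,M}(e^{i\theta}I_m)=e^{in\theta}I_M$, forces $c=nM/m={n+m-1\choose n-1}$, and differentiating at the identity gives $\tr\circ\,d\varphi_{m,M}=c\,\tr$.

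The rest follows formally. Since ${n+m-1\choose n-1}\neq 0$, the trace identity gives $\tr(iH_S)=0\iff\tr(d\varphi_{m,M}(iH_S))=0$; as $\mathfrak{su}(m)$ is exactly the traceless part of $\mathfrak{u}(m)$ and $d\varphi_{m,M}$ is injective, an element $v\in\mathfrak{d}$ lies in $\mathfrak{sd}=d\varphi_{m,M}(\mathfrak{su}(m))$ iff its unique preimage is traceless iff $\tr(v)=0$. For the direct sum, I would first observe $iI_M\in\mathfrak{d}$: one has $\sum_{j}e_{jj}=iI_m$ and $d\varphi_{m,M}(iI_m)=\sum_{j}i\hat{n}_j=in\,I_M$ on $\mathcal{H}_{m,n}$, so $iI_M=d\varphi_{m,M}(\frac{i}{n}I_m)$. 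Applying the linear isomorphism $d\varphi_{m,M}$ to $\mathfrak{u}(m)=\mathfrak{su}(m)\oplus\mathrm{span}_{\mathbb{R}}(iI_m)$ then gives $\mathfrak{d}=\mathfrak{sd}\oplus\mathrm{span}_{\mathbb{R}}(iI_M)$, the sum being direct because $\mathfrak{sd}$ is traceless whereas $\tr(iI_M)=iM\neq 0$.

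The only genuinely computational point is the evaluation $\tr(\hat{n}_j)={n+m-1\choose n-1}$, and the symmetry argument (or the determinant-of-a-representation argument) disposes of it in one line; everything else is transport of structure along the bijection $d\varphi_{m,M}$, so I do not expect a real obstacle.
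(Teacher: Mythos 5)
Your proposal is correct and follows essentially the same route as the paper: the trace identity is obtained by the identical computation (only the $j=l$ terms survive, and $\tr(\hat n_j)=nM/m$ by permutation symmetry of the occupation numbers), and the remaining claims are the same formal consequences, with your transport of $\mathfrak{u}(m)=\mathfrak{su}(m)\oplus\mathrm{span}_{\mathbb{R}}(iI_m)$ through $d\varphi_{m,M}$ being a trivially equivalent packaging of the paper's direct splitting $iH_U=(iH_U-\tfrac{\tr(iH_U)}{M}I_M)+\tfrac{\tr(iH_U)}{M}I_M$. The determinant-of-a-representation shortcut you mention as an alternative is a nice independent check but is not needed.
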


\begin{proof}
$
\tr (iH_U)=\tr (d \varphi_{m,M}(iH_S) ) = \sum_{\ell}^{M} \sum_{j,k}^{m} \langle \ell \vert iH_{Sjk} \hat{a}_j^\dag \hat{a}_k \vert \ell \rangle =
  \sum_{\ell j k} iH_{Sjk}\langle \ell \vert  \hat{a}_j^\dag \hat{a}_k \vert \ell \rangle=  \sum_{\ell j k}iH_{Sjk}\delta_{jk} \langle \ell \vert  \hat{n}_j \vert \ell \rangle  =
\sum_{\ell j } iH_{Sjj}\langle \ell \vert  \hat{n}_j \vert \ell \rangle =\sum_{j } iH_{Sjj}\sum_{\ell} \langle \ell \vert  \hat{n}_j \vert \ell \rangle=\sum_{j}i H_{Sjj}\frac{nM}{m} = {n+m-1 \choose n-1}\tr(iH_S),$
taking into account that the sum for the average of the photon number operator $\hat{n}_j=\hat{a}_j^{\dag}\hat{a}_j$ in mode $j$, $\sum_{\ell} \langle \ell \vert  \hat{n}_j \vert \ell \rangle$, must be the same for every mode (we cover all the permutations for photon occupation). For $n$ total photons and $M$ possible states, the total sum is $nM$, which is divided by $m$ for each position. 

Observe that $\mathrm{span}_{\mathbb{R}}(i I_M)$ is a subalgebra of $\mathfrak{d}$, since
$$
d \varphi_{m,M}(\beta I_m) =\beta\sum_{jk} \delta_{jk}\hat{a}_j^\dag \hat{a}_k = \beta \sum_j \hat{n}_j=\beta n I_M. 
$$

Let $iH_U\in \mathfrak{d}$ and $v=\frac{\tr(iH_U)}{M}I_M\in \mathrm{span}_{\mathbb{R}}(i I_M)$, since $\mathrm{Re}(\tr(iH_U))=0$. Then $iH_U-v\in \sd$ as $\tr(iH_U-v)=0$, hence $iH_U=(iH_U-v)+v\in \mathfrak{sd}+ \mathrm{span}_{\mathbb{R}}(i I_M)$. The result follows from the fact that $\mathfrak{sd}\cap \mathrm{span}_{\mathbb{R}}(i I_M)=0$.

\end{proof}

\begin{lemma}\label{lema:sd}
Let $U\in U(M)$, then $\mathrm{Ad}_U\mid_{\mathfrak{d}}$ is an automorphism if and only if $\mathrm{Ad}_U\mid_{\mathfrak{sd}}$ is an automorphism.
\end{lemma}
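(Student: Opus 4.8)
The plan is to reduce the statement to a question about which linear subspaces are stable under $\mathrm{Ad}_U$, exploiting the two structural facts from Lemma \ref{tr}: the direct sum decomposition $\mathfrak{d}=\mathfrak{sd}\oplus\mathrm{span}_{\mathbb{R}}(iI_M)$ and the characterization $\mathfrak{sd}=\{v\in\mathfrak{d}:\tr(v)=0\}$.

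First I would observe that $\mathrm{Ad}_U\colon\mathfrak{u}(M)\to\mathfrak{u}(M)$, $v\mapsto UvU^{\dag}$, is automatically an injective Lie-algebra endomorphism of all of $\mathfrak{u}(M)$: it preserves the bracket, and $\mathrm{Ad}_{U^{\dag}}$ is a two-sided inverse ($\mathrm{Ad}_{U^{\dag}}\circ\mathrm{Ad}_U=\mathrm{Ad}_{U^{\dag}U}=\mathrm{id}$). Consequently, for a subalgebra $\mathfrak{a}\in\{\mathfrak{d},\mathfrak{sd}\}$, the restriction $\mathrm{Ad}_U\mid_{\mathfrak{a}}$ is an automorphism of $\mathfrak{a}$ if and only if $\mathrm{Ad}_U(\mathfrak{a})\subseteq\mathfrak{a}$: injectivity and bracket-preservation come for free, and since $\dim\mathfrak{a}<\infty$ an injective linear self-map of $\mathfrak{a}$ is onto. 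So the lemma becomes the equivalence $\mathrm{Ad}_U(\mathfrak{d})\subseteq\mathfrak{d}\iff\mathrm{Ad}_U(\mathfrak{sd})\subseteq\mathfrak{sd}$. I would also record the two elementary facts $\tr(\mathrm{Ad}_U(v))=\tr(v)$ (cyclicity of the trace) and $\mathrm{Ad}_U(iI_M)=U(iI_M)U^{\dag}=iI_M$.

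For the forward implication, assume $\mathrm{Ad}_U(\mathfrak{d})\subseteq\mathfrak{d}$ and take $v\in\mathfrak{sd}\subseteq\mathfrak{d}$. Then $\mathrm{Ad}_U(v)\in\mathfrak{d}$ and $\tr(\mathrm{Ad}_U(v))=\tr(v)=0$ since $v\in\mathfrak{sd}$; by the trace characterization in Lemma \ref{tr} this forces $\mathrm{Ad}_U(v)\in\mathfrak{sd}$, i.e.\ $\mathrm{Ad}_U(\mathfrak{sd})\subseteq\mathfrak{sd}$. For the converse, assume $\mathrm{Ad}_U(\mathfrak{sd})\subseteq\mathfrak{sd}$ and take an arbitrary $v\in\mathfrak{d}$; by Lemma \ref{tr} write $v=v_0+\beta\,iI_M$ with $v_0\in\mathfrak{sd}$ and $\beta\in\mathbb{R}$. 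Then $\mathrm{Ad}_U(v)=\mathrm{Ad}_U(v_0)+\beta\,iI_M\in\mathfrak{sd}+\mathrm{span}_{\mathbb{R}}(iI_M)=\mathfrak{d}$, so $\mathrm{Ad}_U(\mathfrak{d})\subseteq\mathfrak{d}$.

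There is essentially no hard step here; the only point that needs care is the first one, namely making explicit that ``$\mathrm{Ad}_U\mid_{\mathfrak{a}}$ is an automorphism'' is equivalent to the plain stability condition $\mathrm{Ad}_U(\mathfrak{a})\subseteq\mathfrak{a}$, so that neither injectivity nor surjectivity nor bracket-preservation has to be re-verified on the subalgebras. Everything else is a direct application of the decomposition and the trace formula already established in Lemma \ref{tr}.
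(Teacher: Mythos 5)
Your proof is correct and follows essentially the same route as the paper: the forward direction uses the trace characterization of $\mathfrak{sd}$ from Lemma \ref{tr}, and the converse uses the decomposition $\mathfrak{d}=\mathfrak{sd}\oplus \mathrm{span}_{\mathbb{R}}(i I_M)$ together with the fact that $\mathrm{Ad}_U$ fixes $iI_M$. Your preliminary reduction of ``automorphism'' to ``stability under $\mathrm{Ad}_U$'' via the two-sided inverse $\mathrm{Ad}_{U^{\dag}}$ and finite-dimensionality is just a cleaner packaging of the injectivity check the paper performs with the trace norm.
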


\begin{proof}
We first assume that $\mathrm{Ad}_U\mid_{\mathfrak{d}}$ is an automorphism, then for any $v\in \mathfrak{sd}$ 
$$
\tr(\mathrm{Ad}_U(v))=\tr(UvU^\dag)=\tr(v)=0,
$$
therefore $\mathrm{Ad}_U(v)\in \mathfrak{sd}$ by Lemma \ref{tr}. This proves that $\mathrm{Ad}_U$ is an endomorphism. Moreover, the kernel is trivial, since $\Vert \mathrm{Ad}_U(v) \Vert=0$ implies $\Vert v\Vert=0$ for any $v$, as can be checked for the trace norm:
$
\Vert \mathrm{Ad}_U(v) \Vert =  \sqrt{\mathrm{tr}(\mathrm{Ad}_U(v)\cdot (\mathrm{Ad}_U(v))^{\dag})} =  
\sqrt{\mathrm{tr} (UvU^{\dag} (UvU^{\dag})^{\dag})   } = \sqrt{\mathrm{tr} (UvU^{\dag} Uv^{\dag}U^{\dag}) } = \sqrt{\mathrm{tr} (vv^{\dag}) } =\Vert v\Vert .
$

Conversely, let us assume that $\mathrm{Ad}_U\mid_{\mathfrak{sd}}$ is an automorphism. By Lemma \ref{tr}, since $\mathfrak{d}=\mathfrak{sd}\oplus \mathrm{span}_{\mathbb{R}}(i I_M)$ and $\mathrm{Ad}_U:\mathrm{span}_{\mathbb{R}}(i I_M) \to \mathrm{span}_{\mathbb{R}}(i I_M)$ is the identity, then $\mathrm{Ad}_U\mid_{\mathfrak{d}}$ is an automorphism.
\end{proof}

We can now prove Theorem \ref{teo2.3}.

\begin{proof}
$\Longrightarrow$) Let $U \in \mathrm{im} \varphi_{m,M}$, then there exists $iH_U\in \mathfrak{d}$ such that $U=\exp(iH_U)$, hence for any $v\in \dd$
$
\mathrm{Ad}_U(iH_U)=\exp(iH_U) v \exp(-iH_U) = v + [iH_U,v]+\frac{1}{2}[iH_U,[iH_U,v]]+\frac{1}{3!}[iH_U,[iH_U,[iH_U,v]]]+\cdots
$
and clearly $\mathrm{Ad}_U(v)\in \dd$, since $iH_U,v\in \dd$ and the Lie bracket is closed in $\dd$. This proves that $\mathrm{Ad}_U$ is an endomorphism. Moreover, the kernel is trivial, since $\Vert \mathrm{Ad}_U(v) \Vert=0$  implies $\Vert v\Vert=0$ for some $v$ (see proof of Lemma \ref{lema:sd}).

$\Longleftarrow$) Let $U \in U(M)$ such that $\mathrm{Ad}_U\mid_{\mathfrak{d}}$ is an automorphism. There exists $\theta \in \mathbb{R}$ such that $\exp(i\theta)U\in SU(M)$.  Moreover, for any $W\in \varphi_{m,M}(SU(m))$, there exists $w\in \sd$ with $W=\exp(w)$, and
$
\exp(i\theta)U W (\exp(i\theta)U)^\dag = U \exp(w)U^\dag =\exp(UwU^\dag)=\exp(\mathrm{Ad}_U(w)),
$
by Lemma \ref{lema:sd}, $\mathrm{Ad}_U(w)\in \sd$, and so $\exp(\mathrm{Ad}_U(w))\in \varphi_{m,M}(SU(m))$. Finally, by Lemma 9 of \cite{OZ17}  
$$
\exp(i\theta)U=\exp (i\beta)U'
$$
for some $\beta \in \mathbb{R}$ and $U'\in  \varphi_{m,M}(SU(m))$. Hence
$$
U=\exp (i(\beta - \theta )) U' =\exp (i(\beta - \theta ) I_M)\exp (w'),
$$
with $w'\in \sd$. Since $[I_M,w']=0$, then
$$
U=\exp (i(\beta - \theta ) I_M+w').
$$
Since $i(\beta - \theta ) I_M+w'\in \mathfrak{sd}\oplus \mathrm{span}_{\mathbb{R}}(i I_M) $, by Lemma \ref{tr} $i(\beta - \theta ) I_M+w'\in \dd$, therefore $\exp (i(\beta - \theta ) I_M+w')\in  \mathrm{im} \varphi_{m,M}$.
\end{proof}

 \subsection{Proof of Theorem \ref{teo3.1}}
 
 \begin{proof}
Let $S=\sum_{jk} S_{jk}\ket{j}\bra{k}$. The adjoint map acting on the tangent vectors $e_{jk}, f_{jk}$ of $\mathfrak{u}(m)$ gives:

$
\bra{\ell}\mathrm{Ad}_S(e_{jk}) \ket{h}= \bra{\ell}S e_{jk}S^\dag \ket{h}
$
\begin{align*}
=&\sum_{s,t,\mu,\nu} \bra{\ell} S_{st}\ket{s}\bra{t} e_{jk}S_{\mu\nu}^*\ket{\nu}\braket{\mu}{h}\\
=& \frac{i}{2}\sum_{s,t,\mu,\nu}S_{st} S_{\mu\nu}^* \braket{\ell}{s}\bra{t} \big (\ket{j}\bra{k}+\ket{k}\bra{j}\big )  \ket{\nu}\braket{\mu}{h}\\
=& \frac{i}{2}\sum_{s,t,\mu,\nu}S_{st} S_{\mu\nu}^*  \big ( \delta_{\ell s}\delta_{tj}\delta_{k\nu}\delta_{\mu h}+\delta_{\ell s}\delta_{tk}\delta_{j\nu}\delta_{\mu h}  \big )\\
=&\frac{i}{2} \big ( S_{\ell j} S_{ h k}^*  + S_{\ell k} S_{h j}^*  \big ),
\end{align*}
and similarly
$$
\bra{\ell}\mathrm{Ad}_S(f_{jk}) \ket{h}=\frac{1}{2} \big ( S_{\ell j} S_{ h k}^*  - S_{\ell k} S_{h j}^*  \big ).
$$
This allows us to obtain
\begin{equation}\label{eq:ss}
 S_{\ell j} S_{ h k}^* = \bra{\ell}\mathrm{Ad}_S(f_{jk}) \ket{h}-i\bra{\ell}\mathrm{Ad}_S(e_{jk}) \ket{h}
\end{equation}
for all $\ell, j , h , k$ and, for $\ell=h$ and $j=k$
\begin{equation}\label{eq:s2}
\vert S_{\ell j} \vert^2=-i\bra{\ell}\mathrm{Ad}_S(e_{jj}) \ket{\ell}.
\end{equation}
Since the matrix $S$ is unitary, there exists $S_{\ell_0 j_0}\neq 0$ and there is $\theta \in \mathbb{R}$ with $S_{\ell_0 j_0}=\vert S_{\ell_0 j_0} \vert e^{i\theta}$. By \eqref{eq:ss} and \eqref{eq:s2}, 
\begin{equation}
S_{\ell j}=e^{i\theta}\frac{ \bra{\ell}\mathrm{Ad}_S(f_{jj_0}) \ket{\ell_0}-i\bra{\ell}\mathrm{Ad}_S(e_{jj_0}) \ket{\ell_0}}   
{\sqrt{-i\bra{\ell_0}\mathrm{Ad}_S(e_{j_0 j_0}) \ket{\ell_0}}}.
\end{equation}
\end{proof}

\section{Conclusions}\label{conclusions}
In this paper, we give a way to check whether any given linear operator $U$ on $n$ photons in $m$ modes can be implemented with linear optics or not and, if it can, provide a explicit method to find the multiport $S$ which gives the desired operator.

The method tries to write the Hamiltonian corresponding to the desired operator in terms of a linear combination of a basis of the subalgebra of all the possible Hamiltonians.  

In principle, the same analysis with a decomposition in the basis of the image subalgebra could be directly applied to the $H_U$ coming from computing the logarithm of the desired operator matrix $U$. However, computing a suitable matrix logarithm is far from trivial. By using the adjoint representation we guarantee a simple and flexible method for any operator $U$. The computation only involves matrix multiplications and solving a linear system and avoids computing eigenvalues.

This method solves the problem completely for any given $U$. 

There are some limitations to this result worth mentioning. First, it applies only to systems which can be exactly implemented, which, as $n$ and $m$ grow, become a smaller subset of the possible matrices $U$. In many cases we are more concerned with finding the best approximation. In a future work we will present a different method which finds the linear optics evolution which is locally closest to the desired operator in terms of some operator distance. 

Second, that a particular operation cannot be implemented has limited implications to the related quantum information problem of whether a given quantum gate can be implemented with linear optics or not. Apart from encoding issues (a gate might be realized using only a subspace of the possible states), notice that, in linear interferometers, permutations are not trivial. For instance, the Quantum Fourier Transform matrix might be realizable for some mapping of the logical states to the photon states but not for others. 

Taking into account these precautions, the inverse method we have given can be used in quantum optics and quantum information to search for particular quantum tasks or primitives which can be implemented with linear optics, such as particular instances of quantum cloning machines \cite{SIG05} or simple quantum algorithms showing quantum advantage. In general, the framework provided from group theory helps us to understand better the connections between classical and quantum evolution in linear optics.

The first author has been funded by the Spanish Ministerio de Econom\'ia y Competitividad, Project TEC2015-69665-R, MINECO/FEDER, UE. The second author has been partially supported by the Spanish
Government Ministerio de Econom\'ia y Competitividad (MINECO-FEDER), grant MTM2017-84851-C2-2-P, and by Universitat Jaume I, grant UJI-B2018-35. The third author was partially supported by the Spanish Government Ministerio de Econom\'ia, Industria y Competitividad (MINECO), grant PGC2018-096446-B-C22, as well as by Universitat Jaume I, grants P1-1B2015-02 and UJI-B2018-10.
 
%\begin{widetext}
%\section{Appendix: examples} 
%\subsection{Existence and computation of $S$}
%In this second example, we show that 
%\end{widetext}

%\bibliography{librodearena}
%merlin.mbs apsrev4-1.bst 2010-07-25 4.21a (PWD, AO, DPC) hacked
%Control: key (0)
%Control: author (0) dotless jnrlst
%Control: editor formatted (1) identically to author
%Control: production of article title (0) allowed
%Control: page (1) range
%Control: year (0) verbatim
%Control: production of eprint (0) enabled
\newcommand{\noopsort}[1]{} \newcommand{\printfirst}[2]{#1}
  \newcommand{\singleletter}[1]{#1} \newcommand{\switchargs}[2]{#2#1}

\end{document}